\theoremstyle{theorem}
\newtheorem{thm}{Theorem}
\newtheorem{lem}{Lemma}
\theoremstyle{definition}                                 
\theoremstyle{definition}                           
\theoremstyle{remark}                             
\newtheorem*{rmk}{Remark}              
\newcommand{\be}{\begin{eqnarray}}
\newcommand{\ee}{\end{eqnarray}}
\newcommand{\R}{\mathbb{R}}  
\newcommand{\C}{\mathbb{C}} 
\newcommand{\N}{\mathbb{N}} 
\newcommand{\Z}{\mathbb{Z}} 
\def\eg{{\it e.g. }} 
\def\ie{{\it i.e. }}
\newcommand{\blue}[1]{{\color{black}#1}} 
\newcommand{\wt}[1]{\widetilde{#1}}
\newcommand{\ceil}[1]{\lceil #1 \rceil}
\def\eg{{\it e.g.}\ }
\def\ie{{\it i.e.}\ }
\numberwithin{equation}{section}
\begin{document}
\title{On Infinite Order Differential Operators in Fractional Viscoelasticity}
	
	    \author{Andrea Giusti$^\dagger$}
		\address{${}^\dagger$ Department of Physics $\&$ Astronomy, University of 	
    	    Bologna and INFN. Via Irnerio 46, Bologna, ITALY and 
	    	 Arnold Sommerfeld Center, Ludwig-Maximilians-Universit\"at, 
	    	 Theresienstra{\ss}e~37, 80333 M\"unchen, GERMANY.}

    \keywords{Infinite order differential operators, fractional calculus, Laplace transform, linear viscoelasticity, Bessel functions.}

\thanks{In \textbf{Fract.~Calc.~Appl.~Anal., Vol~20, No~4 (2017), pp.~854--867}, \textbf{DOI}:~10.1515/fca-2017-0045.}

    \date  {\today}

\begin{abstract}
In this paper we discuss some general properties of viscoelastic models defined in terms of constitutive equations involving infinitely many derivatives (of integer and fractional order). In particular, we consider as a working example the recently developed Bessel models of linear viscoelasticiy that, for short times, behave like fractional Maxwell bodies of order $1/2$.
\end{abstract}

    \maketitle
    

\section{Introduction} \label{Sec-intro}
	In recent years, the quest for potential biomedical applications of ordinary and fractional viscoelastic models has attracted much attention, see \eg \cite{Ata, IC-AG-FM-2017, Freed, AG-FM_MECC16, Mainardi-Gorenflo, Mainardi-12, Meral-2010, Peterson, Pioletti, Provenzano-2002}. In particular, a generalization of the viscoelastic model introduced by Giusti and Mainardi in \cite{AG-FM_MECC16} has lead to the Bessel models \cite{IC-AG-FM-2017}.	
	
	First of all, it is worth recalling some generalities on linear viscoelasticity. As widely discussed in the literature (see \eg \cite{Coleman, Fabrizio-Morro, Gurtin-Sternberg, Mainardi_BOOK10, Mainardi-Spada 2011}), a linear viscoelastic body is defined as a linear system in which the stress plays the role of excitation function for a certain material, while the strain act as the response function, or vice versa. The stress-strain relation, also known as constitutive relation, for a given material can then be expressed mathematically in two different forms, \blue{either as an integral equation or as a differential equation}. 
	
	 First, we denote with $\mathcal{H} ^N$ the Heaviside class of functions (\ie set of causal functions $f(t)$ such that $f \in C^N ([0, + \infty[)$, with $N \in \N$, and with $\sigma, \varepsilon \in \mathcal{H} ^N$ the uniaxial stress and strain functions for a certain system, respectively.	
	
	Given these general assumptions, let us start from the \textbf{integral form}. Then, under the \blue{hypothesis} of sufficiently well behaved causal histories, the constitutive equations can be written in the following forms
	\be \label{eq-integral-form}
	\varepsilon (t) = J(0+) \, \sigma (t) + (\dot{J} \ast \sigma) (t) \, , 
	\qquad
	\sigma (t) = G(0+) \, \varepsilon (t) + (\dot{G} \ast \varepsilon) (t) \, ,
	\ee
	where the $\ast$ represents the convolution product. In this context we call $J (t)$ the \textit{creep compliance} and $G (t)$ the \textit{relaxation modulus}. These function are also often referred to as \textit{material functions of the system}. \blue{It is also worth remarking that both these functions are non-negative on $t \geq 0$, $J(t)$ is non-decreasing and $G(t)$ is non-increasing}.

	Let us now turn our attention to the \textbf{differential form} of the constitutive equation for a given linear system. Indeed, these kind of stress-strain relation are usually encountered while attempting a bottom-up description of a physical system. In fact, it is way easier to write down a differential equation describing the stress-strain relation for a material starting from the fundamental principles of classical mechanics. Moreover, this formalism admits the familiar description of linear viscoelastic systems in terms of \textit{networks of springs and dash-pots}.
	
	In the following we will use $D$ to designate the time derivative operator and we will use $P(D)$ and $Q(D)$ to denote two linear differential operators defined as
	\be 
	P (D) = \sum _{k=0} ^N p_k \, D^k \, , \qquad
	Q (D) = \sum _{k=0} ^N q_k \, D^k \, , 
	\ee
where $p_k , q_k \in \R$, $\forall k \in \N \cup \{ 0 \}$. Clearly, the order of these operators does not exceed $N$.

	Then, the stress-strain relation for a given linear system takes the form
	\be \label{eq-differential-form}
	P (D) \, \sigma (t) = Q (D) \, \varepsilon (t) \, .
	\ee
	
	As shown in \cite{Gurtin-Sternberg}, if the initial conditions \blue{are such that} 
	\be \label{initial-conditions-N}
	\sum _{r=k} ^N p_r \, \sigma ^{(r-k)} (0+) = \sum _{r=k} ^N q_r \, \varepsilon ^{(r-k)} (0+) \, , \qquad k=1, 2 \ldots, N \, ,
	\ee
	together with some regularity conditions on the relaxation function, then the two forms of the constitutive equation are equivalent.
	
	Moreover, if $\sigma (t), \varepsilon (t) = O \left( \exp \left( \alpha \, t \right) \right)$ as $t \to \infty$, for some $\alpha \in \R$, and if \blue{these functions} meet the conditions \eqref{initial-conditions-N}, then, in the Laplace domain, we have that
	\be \label{eq-laplace-form}
	P(s) \, \wt{\sigma} (s) = Q(s) \, \wt{\varepsilon} (s) \, , \qquad \forall s \in \C \,\, : \,\, \texttt{Re} (s) > \alpha \, ,
	\ee
	where we have used the notation $\mathcal{L} \{f(t) \, ; \, s \} := \wt{f} (s)$ with $\mathcal{L}$ denoting the Laplace transform operator. 

	It is worth remarking that $P(s)$ and $Q(s)$ are just polynomials in $s$ whose order does not exceed $N$. Moreover, Eq.~\eqref{eq-laplace-form} states that, if both the stress and the strain meet the condition \eqref{initial-conditions-N}, then the contribution of the initial conditions, while passing from \eqref{eq-differential-form} to \eqref{eq-laplace-form} can be completely neglected.
	
	Now, everything works quite well as long as we work within the framework of a finite linear system, \ie if $N < \infty$. If $N=\infty$ we have to deal with constitutive equation involving infinite order differential operators. Therefore, it is worth recalling some general information about this kind of operators.
	
	Let $\Phi (z) = \sum _{n = 0} ^{\infty} a_{n} \, z^{n}$ and let $f(t)$ be an entire function of $t$, then
	\be \label{IODO}
	\Phi \left( D \right) f(t) := \sum _{n = 0} ^{\infty} a_{n} \, D^n f(t) \, .
	\ee
	In the following $\Phi (D)$ will be referred to as differential operator and the series $ \sum _n a_{n} \, z^{n}$ will be referred to as the \textit{generating power series} of $\Phi (D)$. If the set $\{a_{n} \neq 0 \, | \, n \in \N _{0} \}$ is infinite, then $\Phi (D)$ is said to be a \textit{infinite order} differential operator. In particular, if the series in \eqref{IODO} is convergent $\forall t$, then the differential operator $\Phi (D)$ is said to be \textbf{applicable} to the (entire) function $f(t)$. For further details on these operators, we invite the interested reader to refer to the monograph by Sikkema \cite{Sikkema}. 	
	
	The aim of this paper is to understand what happens if we consider a linear viscoelastic model whose constitutive equation is given by an equation involving infinite order differential operators. Without loss of generality, we will focus our attention on the specific case of a model of the Bessel class \cite{IC-AG-FM-2016}. Nonetheless, the arguments presented in the following sections can be readily extended to any constitutive equation of order $N+\alpha$, with $\alpha \in \R$ and $N$ either finite (integer) or infinite.

\section{The stress-strain equation for a Bessel body} \label{Sec-const-eq}
	Let $\sigma, \varepsilon \in \mathcal{H} ^\infty$, and let us consider a stress-strain relation given by
	\be \label{Bessel-stress-strain-time}
	P_\nu (D_\ast) \, \sigma (t) = Q_\nu (D_\ast) \, \varepsilon (t) \, , \quad \nu > -1 \, ,
	\ee
	where $D_\ast$ denotes the Caputo's fractional derivative and $P_\nu$ and $Q_\nu$ are some infinite order fractional differential operators \blue{defined in terms of the functions}
	\begin{small}
		\be 
	P_\nu (z) = I_\nu (\sqrt{z})  \, , \qquad 
	Q_\nu (z) = I_{\nu+2} (\sqrt{z}) \, ,
	\ee 
	where $I_\alpha (z)$ stands for the series representation of the modified Bessel function of the first kind of order $\alpha \in \R$.
	
	Now, if we assume that the initial data for Eq.~\eqref{Bessel-stress-strain-time} meet the condition
	\be
	\begin{split} \label{initial-condition}
	& \Theta ( \ceil{\nu / 2} - k) \sum _{n = 0} ^\infty p_n \, \sigma ^{(\ceil{\nu / 2} + n - k)} ( 0 +) \, + \\
	& \Theta (k - \ceil{\nu / 2} - 1) \sum _{n = k - \ceil{\nu / 2}} ^\infty p_n \, \sigma ^{(\ceil{\nu / 2} + n - k)} ( 0 +) = \\
	&= 
	\Theta ( \ceil{\nu / 2} + 1 - k) \sum _{n = 0} ^\infty q_n \, \varepsilon ^{(\ceil{\nu / 2} + n + 1 - k)} ( 0 +) \, + \\
	& \Theta (k - \ceil{\nu / 2} - 2) \sum _{n = k - \ceil{\nu / 2} - 1} ^\infty q_n \, \varepsilon ^{(\ceil{\nu / 2} + n + 1 - k)} ( 0 +) \, ,
	\end{split}	
	\ee
		\end{small}	
	for all $k \in \N$, where
	\be 
	p_k = \frac{1}{k! \, \Gamma \left( \nu + k + 1 \right) \, 4^{k + \frac{\nu}{2}}} \, , \qquad 
	q_k = \frac{1}{k! \, \Gamma \left( \nu + k + 3 \right) \, 4^{k + 1 + \frac{\nu}{2}}} \, ,
	\ee
	$\Theta$ is the Heaviside generalized function and $\ceil{x}$ is the celling function of $x$,	then the corresponding constitutive equation, in the Laplace domain, reduces to
	\be \label{Bessel-stress-strain-laplace}
	I_\nu (\sqrt{s}) \, \wt{\sigma} (s) = I_{\nu + 2} (\sqrt{s}) \, \wt{\varepsilon} (s) \, .
	\ee
	Basically, the condition \eqref{initial-condition} guarantees that the contribution these initial data must not appear in the Laplace domain, namely they have to be vanishing or cancel in pair-balance. Moreover, this guarantees the mutual consistency of both the integral representation of the stress-strain relation and the differential form.
	
	Now, from Eq.~\eqref{Bessel-stress-strain-laplace} one can easily deduce the creep compliance for the model (in the Laplace domain), that turns out to be exactly the one that defines the Bessel models \cite{IC-AG-FM-2016}. Indeed, we have that
	\be
	s \wt{J} (s; \nu) = \frac{I_\nu (\sqrt{s})}{I_{\nu + 2} (\sqrt{s})} \, .
	\ee 	
The Bessel creep compliance, in the time domain, is a Bernstein function given by (see \cite{IC-AG-FM-2016, AG-FM-EPJP})
\begin{small}
\begin{equation}
\begin{split}
J(t; \, \nu) = &2 \left( \frac{\nu + 2}{\nu + 3} \right) + 4 (\nu + 1) (\nu + 2) t\\
&- 4 (\nu + 1) \sum _{n=1} ^{\infty} \frac{1}{j_{\nu + 2 , \, n} ^{2}} \exp \left( - j_{\nu + 2 , \, n} ^{2} \, t \right) \, , 
\end{split}
\end{equation} 
\end{small}
where $j_{\nu + 2 , \, n}$ represents the $n$th positive real root of the Bessel functions  of the corresponding order. Moreover, it is worth remarking (see \cite{AG-FM-EPJP}) this function can be thought of as the generating function for an infinite network of \textit{ordinary} springs and dash-pots.  

	Furthermore, this last remark is also supported by the following argument. If the stress and the strain histories meet the condition \eqref{initial-condition}, then we have that \eqref{Bessel-stress-strain-time} reduces to an \textit{ordinary} infinite order differential equation. 
Indeed, we can always recast \eqref{Bessel-stress-strain-laplace} as 
$$
\overline{P} _\nu (s) \, \wt{\sigma} (s) 
	= 
\overline{Q} _\nu (s) \, 
\wt{\varepsilon} (s) \, ,	
$$
where
	\blue{
	\be 
	\overline{P} _\nu (s) = 
	\left( \frac{4}{s} \right) ^{\frac{\nu}{2}} \, I _{\nu} (\sqrt{s})	
	=
	\sum _{k = 0} ^\infty \frac{1}{k! \, \Gamma \left( \nu + k + 1 \right)} \, \left( \frac{s}{4} \right) ^{k}  \, , \label{op_P_nu_reg}
	\ee
	\be
	\overline{Q} _\nu (s) = 
	\left( \frac{4}{s} \right) ^{\frac{\nu}{2}} \, I _{\nu + 2} (\sqrt{s})	
	= 
	\sum _{k = 0} ^\infty \frac{1}{k! \, \Gamma \left( \nu + k + 3 \right)} \, \left( \frac{s}{4} \right) ^{k + 1} \, . \label{op_Q_nu_reg}
	\ee
	}

	Then, inverting back to the time domain, we get the new constitutive equation
	\be \label{Bessel-stress-strain-time-regularized}
	\overline{P} _\nu (D) \, \sigma (t) = \overline{Q} _\nu (D) \, \varepsilon (t) \, , \quad \nu > -1 \, ,
	\ee
	where $D$ is, again, the \textit{ordinary} derivative with respect to time.
	
	\vskip 0.5 cm	
	
	To prove all these results we just need to focus our attention on the behaviour of the initial conditions. In the following discussion we will generalize the arguments presented in \cite{Gurtin-Sternberg} concerning the treatment of initial conditions in linear viscoelasticity to the case of infinite order differential operators.  

	First, let us assume $P_\nu (D_\ast) \, \sigma (t)$ and $Q_\nu (D_\ast) \, \varepsilon (t)$ to be convergent $\forall t \geq 0$. If this is the case, then we can introduce some \textit{cut-off operators} $P^{(N)} _\nu (D_\ast)$, $Q^{(N)} _\nu (D_\ast)$ defined as 
	\be 
	P^{(N)} _\nu (z) &\!\!=\!\!& 
	\sum _{k = 0} ^N \frac{1}{k! \, \Gamma \left( \nu + k + 1 \right)} \, \left( \frac{z}{4} \right) ^{k + \frac{\nu}{2}} \, , \\ \label{cut_op_P}
	Q^{(N)} _\nu &\!\!=\!\!& 
	\sum _{k = 0} ^N \frac{1}{k! \, \Gamma \left( \nu + k + 3 \right)} \, \left( \frac{z}{4} \right) ^{k + 1 + \frac{\nu}{2}} \, , \label{cut_op_Q_nu}
	\ee
for which we have that
$$
P_\nu (D_\ast) \, \sigma (t) = \lim _{N \to \infty} P^{(N)} _\nu (D_\ast) \, \sigma (t) \, , \qquad
Q_\nu (D_\ast) \, \varepsilon (t) = \lim _{N \to \infty} Q^{(N)} _\nu (D_\ast) \, \varepsilon (t) \, .
$$

	We can now perform our analysis on the cut-off operators, for which the background formalism is well established, and then extend the results taking the limit for $N \to \infty$.
	
	Given the previous assumptions, we have that $P^{(N)} _\nu (D_\ast) \, \sigma (t)$ and $Q^{(N)} _\nu (D_\ast) \, \varepsilon (t)$ are two sequences of functions that converge almost everywhere on $t \geq 0$. If we further assume that $\exists \alpha , \beta \in L _1 ([0, \infty[)$ such that
	$$ 
	\left| P^{(N)} _\nu (D_\ast) \, \sigma (t) \right| \leq \alpha (t) \, , \qquad
	\left| Q^{(N)} _\nu (D_\ast) \, \varepsilon (t) \right| \leq \beta (t) \, ,
	$$
for all $N \geq 1$ and almost everywhere on $t \geq 0$, then, taking profit of the dominated convergence theorem, we have that	
\be
	\mathcal{L} \left\{ P_\nu (D_\ast) \, \sigma (t) \, ; \, s \right\} 
	&\!\!=\!\!& \lim _{N \to \infty} \mathcal{L} \left\{ P^{(N)} _\nu (D_\ast) \, \sigma (t)  \, ; \, s \right\} \, , \\
	\mathcal{L} \left\{ Q_\nu (D_\ast) \, \varepsilon (t) \, ; \, s \right\} 
	&\!\!=\!\!& \lim _{N \to \infty} \mathcal{L} \left\{ Q^{(N)} _\nu (D_\ast) \, \varepsilon (t)  \, ; \, s \right\} \, . 
	\ee
	This remark then allows us to deal with finite order operators without losing the connection with the infinite order one.

	For sake of brevity, let us focus only on the strain part. The corresponding analysis for the stress part of the equation can be carried out in the same way.
		
First, let us compute the Laplace transform of $ Q^{(N)} _\nu (D_\ast) \, \varepsilon (t)$. Recalling the Laplace transform of the Caputo's fractional derivative
	\be 
	\mathcal{L} \left\{ D_\ast ^\alpha \, f (t) \, ; \, s \right\} = s^\alpha \, \wt{f} (s) - \sum _{k=0} ^{\ceil{\alpha} - 1} f^{(k)} (0+) \, s^{\alpha - k - 1} \, , \quad \alpha \in \R^+ \, ,  
	\ee		
then one can easily infer that
	\begin{small}
	\be
	\mathcal{L} \left\{ Q^{(N)} _\nu (D_\ast) \, \varepsilon (t)  \, ; \, s \right\} = Q ^{(N)} _\nu (s) \, \wt{\varepsilon} (s) - 
	\sum _{n = 0} ^N q_n \sum _{k = 0} ^{\ceil{\nu / 2} + n} \varepsilon ^{(k)} (0+) \, s^{(\nu / 2) + n - k} \, . \label{LT-Q}
	\ee
	\end{small}

Now, one can easily show that
	
	\begin{lem} \label{Lemma-3}
	Let $\alpha \in \R ^+$ then
	\be \label{eq-lemma3}
	\sum _{k=0} ^{\ceil{\alpha} - 1} f^{(k)} (0+) \, s^{\alpha - k - 1} = 
	s^{\{\alpha\} - \delta _\alpha} \sum _{r = 1} ^{\ceil{\alpha}} f^{(\ceil{\alpha} - r)} (0+) \, s^{r - 1} \, ,
	\ee
	where $\{\alpha\}$ is the fractional part of $\alpha$ and $\delta _\alpha$ is such that
	\be
	\delta _\alpha = 
	\left\{
	\begin{aligned}
	& 0, \,\,\, \alpha \in \Z \, ,\\
	& 1, \,\,\, \alpha \notin  \Z \, .
	\end{aligned}
	\right.	
	\ee
	\end{lem}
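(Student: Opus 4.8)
The identity is purely a re-indexing of a finite sum together with a bookkeeping observation about the ceiling function, so the plan is elementary. First I would change the summation index on the left-hand side of \eqref{eq-lemma3} by setting $r = \ceil{\alpha} - k$, i.e. $k = \ceil{\alpha} - r$. As $k$ runs over $0, 1, \ldots, \ceil{\alpha} - 1$, the new index $r$ runs over $\ceil{\alpha}, \ceil{\alpha} - 1, \ldots, 1$, so after reordering the terms we obtain
\be
\sum _{k=0} ^{\ceil{\alpha} - 1} f^{(k)} (0+) \, s^{\alpha - k - 1}
= \sum _{r = 1} ^{\ceil{\alpha}} f^{(\ceil{\alpha} - r)} (0+) \, s^{\alpha - \ceil{\alpha} + r - 1} \, .
\ee

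Next I would pull out the factor $s^{\alpha - \ceil{\alpha}}$, which is common to every term, to reach $s^{\alpha - \ceil{\alpha}} \sum _{r=1} ^{\ceil{\alpha}} f^{(\ceil{\alpha}-r)}(0+)\, s^{r-1}$. It then remains only to identify the exponent: I claim $\alpha - \ceil{\alpha} = \{\alpha\} - \delta _\alpha$. This I would check by the two cases in the definition of $\delta _\alpha$. If $\alpha \in \Z$, then $\ceil{\alpha} = \alpha$ and $\{\alpha\} = 0 = \delta _\alpha$, so both sides vanish. If $\alpha \notin \Z$, then $\ceil{\alpha} = \lfloor \alpha \rfloor + 1$ and $\{\alpha\} = \alpha - \lfloor \alpha \rfloor$, whence $\alpha - \ceil{\alpha} = (\alpha - \lfloor \alpha \rfloor) - 1 = \{\alpha\} - 1 = \{\alpha\} - \delta _\alpha$. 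Substituting this into the exponent yields exactly the right-hand side of \eqref{eq-lemma3}.

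There is no real obstacle here; the only point requiring a moment's care is the elementary identity $\alpha - \ceil{\alpha} = \{\alpha\} - \delta _\alpha$, which is why I would spell out the integer and non-integer cases separately rather than treat it as self-evident. Everything else is a relabeling of finitely many terms, and convergence is not an issue since the sum is finite.
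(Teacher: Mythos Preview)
Your proposal is correct and follows essentially the same approach as the paper: both arguments factor out $s^{\alpha-\ceil{\alpha}} = s^{\{\alpha\}-\delta_\alpha}$ and reindex via $r=\ceil{\alpha}-k$, just in the opposite order (the paper pulls out the factor first using the remark $\alpha=\{\alpha\}+\ceil{\alpha}-\delta_\alpha$ and then reindexes, while you reindex first and then verify the exponent identity by cases).
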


\begin{rmk}
Notice that for $\alpha \in \R^+$ we have that $\alpha = \{ \alpha \} + \ceil{\alpha} - \delta _\alpha$.
\end{rmk}

\blue{
\begin{proof}	
Starting from the RHS of \eqref{eq-lemma3}, taking profit of the previous remark, one just needs to extract the integer part of $\alpha$ form $s^\alpha$. Then, the LHS of \eqref{eq-lemma3} is obtained by changing the summation index from $k$ to $r=\ceil{\alpha}-k$. Indeed, taking profit of the last remark,
\be
\sum _{k=0} ^{\ceil{\alpha} - 1} f^{(k)} (0+) \, s^{\alpha - k - 1} = s^{\{\alpha\} - \delta _\alpha} \sum _{k=0} ^{\ceil{\alpha} - 1} f^{(k)} (0+) \, s^{\ceil{\alpha} - k - 1} \, ,
\ee
then, defining the index $r = \ceil{\alpha} - k$, we get that
\be 
\sum _{k=0} ^{\ceil{\alpha} - 1} f^{(k)} (0+) \, s^{\ceil{\alpha} - k - 1} = \sum _{r = 1} ^{\ceil{\alpha}} f^{(\ceil{\alpha} - r)} (0+) \, s^{r - 1} \, ,
\ee
that concludes our proof.
\end{proof}	
}

	Then, by means of the result in Lemma \ref{Lemma-3} we can rewrite Eq.~\eqref{LT-Q} as	
	\be
	\begin{split}
	\mathcal{L} \left\{ Q^{(N)} _\nu (D_\ast) \, \varepsilon (t)  \, ; \, s \right\} &= Q ^{(N)} _\nu (s) \, \wt{\varepsilon} (s)  - \frac{s^{\{\nu / 2\} - \delta _{\nu / 2}}}{s} \times \\
	& \times \sum _{n = 0} ^N q_n \sum _{r = 1} ^{\ceil{\nu / 2} + n + 1} s^r \,  \varepsilon ^{(\ceil{\nu / 2} + n + 1 - r)} (0+) 
	 \, . \label{LT-Q1}
	\end{split}
	\ee	
	
	Following a similar argument as the one by Gurtin and Sternberg in \cite{Gurtin-Sternberg}, one can infer that
	\begin{lem} \label{Lemma-4}
	Let $a, M \in \N$ and let $q_n, \, b_n, \, s \in \R, \, \forall n \in \N _0$, then
	\be \label{eq-lemma-4}
	\begin{split}
	\sum _{n = 0} ^N q_n \sum _{r = 1} ^{M + n} s^r \,  b _{M + n - r} = 
	\sum _{k = 1} ^{N + M} s^k \, 
	\Big[ &\Theta ( M - k) \sum _{h = 0} ^N q_h \, b_{M + h - k}  + \\
	& \Theta (k - M - 1) \sum _{h = k - M} ^N q_h \, b_{M + h - k} \Big] \, ,
	\end{split}	
	\ee
	for all $N \in \N$.
	\end{lem}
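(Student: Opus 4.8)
The plan is to prove \eqref{eq-lemma-4} by the standard device of interchanging the order of a double summation and then relabelling the outer index by the power of $s$. Denote the left-hand side by $L$. I would first view $L = \sum_{n=0}^{N}\sum_{r=1}^{M+n} q_n\, b_{M+n-r}\, s^{r}$ as a sum over the triangular-type index set $\{(n,r) : 0\le n\le N,\ 1\le r\le M+n\}$, and then regroup the terms according to the value of the exponent $k:=r$.

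The key computation is to read off, for each fixed $k$, the admissible range of the surviving index $n$. The constraints $1\le k$, $k\le M+n$, $0\le n\le N$ translate into $n\ge\max(0,\,k-M)$ and $n\le N$; moreover $k$ itself ranges over $1\le k\le N+M$, the value $k=1$ being attained at $n=0$ and the value $k=N+M$ at $n=N$. This yields the intermediate identity
\[
L \;=\; \sum_{k=1}^{N+M} s^{k}\sum_{n=\max(0,\,k-M)}^{N} q_n\, b_{M+n-k}\, .
\]
It then remains only to resolve the $\max$: for $k\le M$ one has $\max(0,k-M)=0$, while for $k\ge M+1$ one has $\max(0,k-M)=k-M$. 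These two alternatives are mutually exclusive and exhaustive, and are encoded precisely by the factors $\Theta(M-k)$ and $\Theta(k-M-1)$, using the convention $\Theta(0)=1$ inherent in the Heaviside generalized function. Renaming $n$ as $h$ then reproduces the right-hand side of \eqref{eq-lemma-4} verbatim.

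The only point requiring care — and the step I expect to be the main (minor) obstacle — is the bookkeeping at the endpoints: one must verify that no term is lost or double-counted at the transition $k=M$, that the lower limit of the inner sum is exactly $0$ (resp.\ $k-M$) in each case and not off by one, and that the outer sum genuinely runs from $k=1$ to $k=N+M$. The hypothesis that $M$ and $N$ are natural numbers is what guarantees that all these index ranges are well-defined sets of integers; the reality of the $q_n$, $b_n$ and $s$ is irrelevant to the combinatorics and merely makes \eqref{eq-lemma-4} an identity of real numbers. An alternative, entirely equivalent route would be induction on $N$, peeling off the $n=N$ summand on the left and matching it with the $\Theta(k-M-1)$-supported tail on the right; but the direct reindexing argument above seems cleaner.
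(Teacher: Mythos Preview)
Your argument is correct. The direct reindexing you carry out --- collecting all terms on the left by the power $s^k$, reading off the admissible range $\max(0,k-M)\le n\le N$ for the inner index, and then splitting the $\max$ via the two Heaviside factors --- is a clean and complete proof of \eqref{eq-lemma-4}. Your endpoint checks at $k=M$, $k=1$, $k=N+M$ are fine, and you correctly note that the convention $\Theta(0)=1$ is what makes the two pieces mutually exclusive and exhaustive.

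The paper, by contrast, proves the identity by induction on $N$: it verifies the base case $N=1$ by expanding both sides explicitly, and for the inductive step it strips off the $n=\mathcal{N}+1$ summand on the left, applies the inductive hypothesis to the remaining $\sum_{n=0}^{\mathcal{N}}$, and then shows that the right-hand side for $N=\mathcal{N}+1$ decomposes the same way. This is precisely the ``alternative, entirely equivalent route'' you mention at the end of your proposal. Your direct Fubini-type argument is shorter and makes the combinatorial content of the lemma transparent (it is nothing more than reversing the order of a finite double sum over a trapezoidal index set); the paper's induction, while longer, has the virtue of being entirely mechanical and avoiding any need to visualise the index region. Both approaches are valid; yours is the more elegant of the two.
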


\blue{
\begin{proof}
	This can be proved by induction. 

First, for $N = 1$ the LHS of Eq.~\eqref{eq-lemma-4} is given by
\be
\begin{split}
\sum _{n = 0} ^1 q_n \sum _{r = 1} ^{M + n} s^r \,  b _{M + n - r} &= 
q_0 \sum _{r = 1} ^{M} s^r \,  b _{M - r} + q_1 \sum _{r = 1} ^{M + 1} s^r \,  b _{M + 1 - r} =\\
&= \sum _{r = 1} ^{M} s^r \left( q_0 \, b _{M - r} + q_1 \, b _{M + 1 - r} \right) + s^{M+1} \, q_1 \, b_0 \, .
\end{split}
\ee
Beside, the RHS of Eq.~\eqref{eq-lemma-4} is given by,
\be
\begin{split}
& \sum _{k = 1} ^{M + 1} s^k \, 
	\Big[ \Theta ( M - k) \sum _{h = 0} ^1 q_h \, b_{M + h - k}  + 
	 \Theta (k - M - 1) \sum _{h = k - M} ^1 q_h \, b_{M + h - k} \Big] =\\
	 &= \sum _{k = 1} ^{M + 1} s^k \, \Big[ \Theta ( M - k) \left(q_0 \, b_{M - k} + q_1 \, b_{M - k} \right) + \Theta (k - M - 1) \, q_1 \, b_0 \Big] =\\
	 &= \sum _{k = 1} ^{M} s^k \left( q_0 \, b _{M - k} + q_1 \, b _{M + 1 - k} \right) + s^{M+1} \, q_1 \, b_0 \, .
\end{split}
\ee
Thus, it is clear that for $N = 1$ both sides of are equal Eq.~\eqref{eq-lemma-4} and, therefore, Eq.~\eqref{eq-lemma-4} is verified for $N=1$.

	Now, let us suppose that Eq.~\eqref{eq-lemma-4} is true for $N = \mathcal{N}$, for $\mathcal{N} > 1$, then for $N = \mathcal{N} + 1$ we have that the LHS is given by
	\be 
	\begin{small}
	\begin{split}
	\!\!\sum _{n = 0} ^{\mathcal{N} + 1}\!\! q_n \sum _{r = 1} ^{M + n} s^r \,  b _{M + n - r} &= 
	\sum _{n = 0} ^{\mathcal{N}} q_n \!\! \sum _{r = 1} ^{M + n} \!\! s^r \,  b _{M + n - r} + q_{\mathcal{N} + 1}\!\! \sum _{r = 1} ^{M + \mathcal{N} + 1} \!\! s^r   b _{M + \mathcal{N} + 1 - r} = \\
	&= \sum _{k = 1} ^{\mathcal{N} + M} s^k \, 
	\Big[ \Theta ( M - k) \sum _{h = 0} ^{\mathcal{N}} q_h \, b_{M + h - k}  +\\
	 &+ \Theta (k - M - 1) \sum _{h = k - M} ^{\mathcal{N}} q_h \, b_{M + h - k} \Big] + \\
	 &+ q_{\mathcal{N} + 1}\!\! \sum _{r = 1} ^{M + \mathcal{N} + 1} \!\! s^r   b _{M + \mathcal{N} + 1 - r} \, .
	\end{split}
	\end{small}
	\ee

	Beside, the RHS of Eq.~\eqref{eq-lemma-4} is given by,
\begin{small}
\be 
\notag
&\!\! \!\!& \sum _{k = 1} ^{\mathcal{N} + 1 + M} s^k \, 
	\Big[ \Theta ( M - k) \sum _{h = 0} ^{\mathcal{N} + 1} q_h \, b_{M + h - k}  +
 	\Theta (k - M - 1) \sum _{h = k - M} ^{\mathcal{N} + 1} q_h \, b_{M + h - k} \Big]  =\\ \notag
&\!\!=\!\!& \sum _{k = 1} ^{\mathcal{N} + M} s^k \, 
	\Bigg[ \Theta ( M - k) \left( \sum _{h = 0} ^{\mathcal{N}} q_h \, b_{M + h - k} + q_{\mathcal{N} + 1} \, b_{M + \mathcal{N} + 1 - k} \right) +\\ \notag
	&\!\!+\!\!& \Theta (k - M - 1) \left( \sum _{h = k - M} ^{\mathcal{N}} q_h \, b_{M + h - k} + q_{\mathcal{N} + 1} \, b_{M + \mathcal{N} + 1 - k} \right) \Bigg] +\\ \notag
	&\!\!+\!\!& s^{M+\mathcal{N} + 1} \, q_{\mathcal{N} + 1} \, b_{0} =\\ \notag
	&\!\!=\!\!& \sum _{k = 1} ^{\mathcal{N} + M} s^k \, 
	\Big[ \Theta ( M - k) \sum _{h = 0} ^\mathcal{N} q_h \, b_{M + h - k}  + 
	 \Theta (k - M - 1) \sum _{h = k - M} ^\mathcal{N} q_h \, b_{M + h - k} \Big] + \\ \notag
	 &\!\!+\!\!& q_{\mathcal{N} + 1}\!\! \sum _{r = 1} ^{M + \mathcal{N} + 1} \!\! s^r   b _{M + \mathcal{N} + 1 - r} \, . \notag
\ee
\end{small}

	Thus, Eq.~\eqref{eq-lemma-4} holds for $N = \mathcal{N} + 1$ and the proof of the induction step is complete. In conclusion, we have that Eq.~\eqref{eq-lemma-4} holds $\forall N \in \N$.
\end{proof}	
}

	We can now use the result in Lemma \ref{Lemma-4} to show that Eq.~\eqref{LT-Q1} can be rewritten as
	\begin{small}
	\be \label{LT-Q2}
	\begin{split}
	\mathcal{L} \Big\{ Q^{(N)} _\nu (D_\ast) \, &\varepsilon (t)  \, ; \, s \Big\} = Q ^{(N)} _\nu (s) \, \wt{\varepsilon} (s)  - \frac{s^{\{\nu / 2\} - \delta _{\nu / 2}}}{s} \times \\
	& \times \sum _{k = 1} ^{\ceil{\nu/2} + N + 1} s^k \, 
	\Bigg[ \Theta ( \ceil{\nu / 2} + 1 - k) \sum _{n = 0} ^N q_n \, \varepsilon ^{(\ceil{\nu / 2} + n + 1 - k)} ( 0 +)  \\
	& + \Theta (k - \ceil{\nu / 2} - 2) \sum _{n = k - \ceil{\nu / 2} - 1} ^N q_n \, \varepsilon ^{(\ceil{\nu / 2} + n + 1 - k)} ( 0 +) \Bigg]
	 \end{split}
	\ee
	\end{small}
	\blue{The proof is quite straightforward, indeed we just need to apply the result in} Lemma \ref{Lemma-4} to the second term in Eq.~\eqref{LT-Q1}, noticing that $M = \ceil{\nu / 2} + 1$ and $b_n = \varepsilon ^{(n)} (0+)$.	
	
	If we perform the same analysis for $P^{(N)} _\nu (D_\ast) \, \sigma (t)$ we get
	\begin{small}
		\be \label{LT-P2}
	\begin{split}
	\mathcal{L} \Big\{ P^{(N)} _\nu (D_\ast) \, &\sigma (t)  \, ; \, s \Big\} = P ^{(N)} _\nu (s) \, \wt{\sigma} (s)  - \frac{s^{\{\nu / 2\} - \delta _{\nu / 2}}}{s} \times \\
	& \times \sum _{k = 1} ^{\ceil{\nu/2} + N} s^k \, 
	\Bigg[ \Theta ( \ceil{\nu / 2} - k) \sum _{n = 0} ^N p_n \, \sigma ^{(\ceil{\nu / 2} + n - k)} ( 0 +)  \\
	& + \Theta (k - \ceil{\nu / 2} - 1) \sum _{n = k - \ceil{\nu / 2}} ^N p_n \, \sigma ^{(\ceil{\nu / 2} + n - k)} ( 0 +) \Bigg] \, .
	 \end{split}
	\ee
		\end{small}	

	Then, by taking the limit for $N \to \infty$ we get $\mathcal{L} \left\{ Q _\nu (D_\ast) \, \varepsilon (t)  \, ; \, s \right\}$ and $\mathcal{L} \left\{ P _\nu (D_\ast) \, \sigma (t)  \, ; \, s \right\}$, respectively.

	Hence, if we now assume that the initial data meet the condition in Eq.~\eqref{initial-condition}, then
	\be	
	\mathcal{L} \left\{ Q _\nu (D_\ast) \, \varepsilon (t)  \, ; \, s \right\} = \mathcal{L} \left\{ P _\nu (D_\ast) \, \sigma (t)  \, ; \, s \right\}
	\ee
	reduces to \eqref{Bessel-stress-strain-laplace}. Then, all the subsequent results discussed in the first part of this section follow immediately from \eqref{initial-condition} and \eqref{Bessel-stress-strain-laplace}.
	
	\blue{
	\subsection{Bessel $\&$ Fractional Maxwell of order $1/2$} \label{sub-1}
	From the previous discussion we have that the constitutive equation for a Bessel model of order $\nu > -1$ can be expressed as in Eq.~\eqref{Bessel-stress-strain-time-regularized}, provided that we assume the validity of the condition in \eqref{initial-condition}.	
	
	Now, following the discussion in \cite{IC-AG-FM-Conf}, it is interesting to see that, even if we are dealing with ordinary infinite order differential equation, as $t \to  0^+$ we have that the stress-strain relation reduces to
	\be 
\left[ 1 + \frac{1}{2 (\nu + 1)} \,  D _{\ast} ^{1/2} \right] \ \sigma (t) 
= 
\frac{1}{2 (\nu + 1)} \, D _{\ast} ^{1/2} \varepsilon (t) \, ,
	\ee 
which is actually Maxwell-like fractional model of order $1/2$ (see \eg \cite{Mainardi_BOOK10, Mainardi-Spada 2011}).	
	
	\subsection{Entire functions and applicability} \label{sub-2}
	Now, let us focus our attention on the regularized operators $\overline{P} _\nu$ and $\overline{Q} _\nu$ defined in terms of the power series \eqref{op_P_nu_reg} and \eqref{op_Q_nu_reg}, respectively. In particular, it is easy to prove the following statement	
	\begin{thm} \label{th-1}
	Let $\overline{P} _\nu (z)$ and $\overline{Q} _\nu (z)$ be the functions defined as in \eqref{op_P_nu_reg} and \eqref{op_Q_nu_reg}. Then we can note that $\overline{P} _\nu (z)$ and $\overline{Q} _\nu (z)$ are entire functions of order $\rho = 1/2$ and type $\sigma = 1$. 
	\end{thm}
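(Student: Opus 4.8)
The plan is to read off the order and the type directly from the Taylor coefficients. Recall that for an entire function $f(z)=\sum_{k\ge 0}a_k z^k$ the order is $\rho=\limsup_{k\to\infty}\frac{k\ln k}{-\ln|a_k|}$ and, once $0<\rho<\infty$ is known, the type $\sigma$ is determined by $(\sigma e\rho)^{1/\rho}=\limsup_{k\to\infty}k^{1/\rho}|a_k|^{1/k}$ (a classical fact; see, e.g., Boas, \emph{Entire Functions}, Ch.~2). Since $\nu>-1$ we have $\Gamma(\nu+k+1)\in(0,\infty)$ for every $k\ge 0$, so all the coefficients below are well defined and strictly positive and absolute values may be dropped throughout.

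For $\overline{P}_\nu(z)=\sum_{k\ge 0}a_k z^k$ the coefficient is $a_k=\bigl(k!\,\Gamma(\nu+k+1)\,4^{k}\bigr)^{-1}$. First I would apply Stirling's formula to $-\ln a_k=\ln k!+\ln\Gamma(\nu+k+1)+k\ln 4$; since both $\ln k!$ and $\ln\Gamma(\nu+k+1)$ equal $k\ln k-k+O(\ln k)$, this gives $-\ln a_k=2k\ln k-2k+k\ln 4+O(\ln k)$, whose leading term is $2k\ln k$, whence $\rho=\limsup_k\frac{k\ln k}{2k\ln k+O(k)}=\tfrac12$. For the type, with $\rho=\tfrac12$ the second formula reads $\bigl(\tfrac12\sigma e\bigr)^{2}=\limsup_k k^{2}\,a_k^{1/k}$; dividing the Stirling estimate by $k$ gives $a_k^{1/k}=\exp\!\bigl(-2\ln k+2-\ln 4+o(1)\bigr)=\frac{e^{2}}{4k^{2}}\bigl(1+o(1)\bigr)$, so $\limsup_k k^{2}a_k^{1/k}=e^{2}/4=\bigl(\tfrac12 e\bigr)^{2}$, which forces $\sigma=1$. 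The function $\overline{Q}_\nu$ has coefficients $b_m=\bigl((m-1)!\,\Gamma(\nu+m+2)\,4^{m}\bigr)^{-1}$ for $m\ge 1$ (and $b_0=0$), i.e. the same shape as $a_k$ up to an index shift by one and the harmless replacement of $\nu$ by $\nu+1$ in the Gamma argument, so the identical computation yields $\rho=\tfrac12$ and $\sigma=1$ once more (the vanishing of $b_0$ being irrelevant for a $\limsup$).

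A structurally cleaner alternative, worth recording as a remark, is to observe that $\overline{P}_\nu(z)=\Lambda_\nu(\sqrt z)$ and $\overline{Q}_\nu(z)=\tfrac{z}{4}\,\Lambda_{\nu+2}(\sqrt z)$, where $\Lambda_\mu(\zeta):=\sum_{k\ge 0}\frac{(\zeta^{2}/4)^{k}}{k!\,\Gamma(\mu+k+1)}=(\zeta/2)^{-\mu}I_\mu(\zeta)$ is the single-valued entire function attached to $I_\mu$, which is classically of order $1$ and type $1$; replacing $\zeta$ by $\sqrt z$ turns a bound $\exp\bigl((1+\varepsilon)|\zeta|\bigr)$ into $\exp\bigl((1+\varepsilon)|z|^{1/2}\bigr)$ and the asymptotic $\Lambda_\mu(\zeta)\sim\mathrm{const}\cdot\zeta^{-\mu-1/2}e^{\zeta}$ on the positive axis into $\mathrm{const}\cdot z^{-(\mu+1/2)/2}e^{\sqrt z}$, so the square-root substitution halves the order to $1/2$ and leaves the type equal to $1$, while multiplication by the polynomial factor $z/4$ changes neither. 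Either way the argument is entirely routine; the only delicate point is to check that the $O(\ln k)$ (respectively $o(1)$) remainders in Stirling's expansion are genuinely negligible against the $k\ln k$ and $k$ terms that control the order and the type limits, so there is no substantive obstacle.
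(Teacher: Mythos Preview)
Your proof is correct and follows essentially the same route as the paper: both compute the order and type directly from the Taylor coefficients via the standard formulas, using Stirling's approximation to extract the leading $2k\ln k$ behaviour of $-\ln a_k$ and the limit $k^{2}a_k^{1/k}\to e^{2}/4$. Your alternative remark, deducing $\rho=1/2$, $\sigma=1$ from the known order and type of $(\zeta/2)^{-\mu}I_\mu(\zeta)$ under the substitution $\zeta\mapsto\sqrt{z}$, parallels the paper's own post-proof remark that the result can be read off from the Mittag-Leffler/Bessel literature.
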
	
	
\begin{proof}
Let us focus our attention on the function $\overline{P} _\nu (z)$, then the proof for $\overline{Q} _\nu (z)$ follows by the very same procedure. 
 	
 	First, given the definition of $\overline{P} _\nu (z)$, we can immediately infer that it is an entire function because it is defined in terms of a modified Bessel function. 
 	
 	Now, given an entire function represented by the power series $ \sum _{n=0} ^\infty a_n \, z^n$, then the order $\rho$ and the type $\sigma$ of such a function are given by (see \cite{Holland})
\begin{equation*}
\begin{split}
\rho ^{-1} &= \liminf _{n \to \infty} \frac{\log \left( 1 / |a_{n}| \right)}{n \log n}  \, ,\\
(e \, \rho \, \sigma )^{1/\rho } &= \limsup _{n\to \infty} n^{1/\rho } \, |a_{n}|^{1/n} \, .
\end{split}
\end{equation*}

	Thus, for $\overline{P} _\nu (z)$ we have that
	\be
	a_n = \frac{1}{n! \, \Gamma \left( \nu + n + 1 \right) \, 2^{2n}} \, , 
	\ee
	from which it follows that
	\be 
	\log \left( 1 / |a_{n}| \right) = \log n! + \log \Gamma \left( \nu + n + 1 \right) + 2 \, n  \log 2 \, .
	\ee
	Recalling Stirling's approximation, \ie
	\be 
	\log \Gamma (z) = z \log z - z + O (\log z) \, , \qquad |z| \to \infty \, , \,\,\, |\texttt{Arg}(z)| \leq \pi \, ,
	\ee
	then,
	\be 
	\log \left( 1 / |a_{n}| \right) = 2 \, n \log n + O (n) \, , \qquad \mbox{as} \,\, n \to \infty \, .
	\ee
	Hence, for $\overline{P} _\nu (z)$ we can conclude that
	\be 
	\frac{1}{\rho} = \liminf _{n \to \infty} \frac{\log \left( 1 / |a_{n}| \right)}{n \log n} = \liminf _{n \to \infty} 2 + o \left( \frac{1}{\log n} \right) = 2 \, .
	\ee
	Analogously, it is easy to see that
	$$ \limsup _{n\to \infty} n^{2} \, |a_{n}|^{1/n} = \frac{e^2}{4} \, , $$
	from which we can infer that
	$$ \sigma = 1 \, , $$
	result that concludes our proof.
\end{proof}	
	
\begin{rmk}
We could have come to the same results in Theorem \ref{th-1}, alternatively, by recalling that the functions \eqref{op_P_nu_reg} and \eqref{op_Q_nu_reg} are basically a rescaled version of the modified Bessel functions. Now, the latter can be seen as particular realizations of the Mittag-Leffler functions, for which a complete discussion of the order and type has been carried out by Kiryakova in \cite{Kiryakova}.
\end{rmk}	
	
	This result is particularly interesting for two distinct reasons.
	 	
	First, the case in which an infinite order differential operator is defined in terms of a generating power series corresponding to an entire function has been studied by several authors, see \eg \cite{Cha-etal, Sikkema}. In particular, according to these papers, in this case it seems to be easier to prove the applicability of such operators to certain subsets of the space of entire functions. For example, in \cite{Cha-etal} Cha et al  have been able to provide a sufficient condition for two entire functions, $f(z)$ and $g(z)$, to be such that the combination $\sum _n f^{(n)} (0) \, D^n g(z) / n!$ represents an entire function. Moreover, they have also deduced the growth of the resulting function, induced by the types of both $f(z)$ and $g(z)$. 

	Secondly, it is interesting to note that the (fractional) order of asymptotic constitutive equation \eqref{Bessel-stress-strain-time-regularized}, as $t \to 0^+$, matches exactly with the order of the entire functions defining the differential operators. Whether there is a formal connection between these two results is still unknown, but it is surely worth of further studies.
	
}

		
\section{Conclusions}
	After a brief review of the basic formalism of linear viscoelasticity, we have discussed some properties of viscoelastic models defined in terms of constitutive equations involving infinitely many derivatives. In particular, we have considered, as a working example, the recently developed Bessel models of linear viscoelasticiy. 
	
	In Section \ref{Sec-const-eq}, we have been able to connect the constitutive equation for a Bessel model with the theory of \textit{ordinary} infinite order differential equations. We have also determined the form of the initial data that guarantees the mutual consistency of both the integral and the differential form of the stress-strain relation. This condition also allows for a constitutive equation, in the Laplace domain, that does not involve the explicit contribution of initial data. 
	Finally, in Section \ref{sub-1} and \ref{sub-2} we have observed that, even thought the stress-strain relation for a Bessel body involves infinitely many \textit{ordinary} derivatives, the asymptotic behaviour as $t \to 0^+$ is effectively described by a fractional Maxwell model of order $1/2$. We have also pointed out that the order of the asymptotic (fractional) model is the same as the order of the entire functions that define the regularized operators \eqref{op_P_nu_reg} and \eqref{op_Q_nu_reg}.

\section*{Acknowledgements}
	The work of the author has been carried out in the framework of the activities of the National Group of Mathematical Physics (GNFM, INdAM).
 	
 	I am particularly grateful to Prof. Francesco Mainardi for all of the support and mentorship he has shown me over the past eight years and to Prof. Roberto Casadio for helpful comments and discussions, and also for his patience and constant support. I would also like to extend my thanks to I. Colombaro, V. A. Diaz, Prof. M. Fabrizio, R. Garra, Prof. A. Kamenchtchik, \blue{Prof. V. Kiryakova, Prof. S. Rogosin} and Prof. T. Ruggeri for many helpful discussions.


\vskip 1 cm

 
 \end{document}